\newtheorem{conjecture}{Conjecture}
\newtheorem{remark}{Remark}
\newtheorem{theorem}{Theorem}
\title{\LARGE \bf Distributed Area Coverage Control with Imprecise Robot Localization}
\author{Sotiris Papatheodorou, Yiannis Stergiopoulos, and Anthony Tzes
\thanks{The authors are with the Electrical \& Computer Engineering Department, University of Patras, Greece}
\thanks{Corresponding author's e-mail: tzes@ece.upatras.gr}
\thanks{This work has received funding from the European Union Horizon 2020 Research and Innovation Programme under the Grant Agreement No. 644128, AEROWORKS}
}
\begin{document}

\maketitle
\thispagestyle{empty}
\pagestyle{empty}

\begin{abstract}
	This article examines the problem of area coverage for a network of mobile robots with imprecise agent localization.
	Each robot has uniform radial sensing ability, governed by first order kinodynamics. The convex-space is partitioned based on the Guaranteed Voronoi (GV) principle and each robot's area of responsibility corresponds to its GV-cell, bounded by hyperbolic arcs.
	The proposed control law is distributed, demands the positioning information about its GV-Delaunay neighbors and has an inherent collision avoidance property.
\end{abstract}

\begin{keywords}
	Robot swarms, Unmanned systems
\end{keywords}

\section{Introduction}

The problem of area coverage by a swarm of mobile robots is one of great interest during the last years.
Autonomous robotic agents with on board sensors disperse in areas of interest in order to detect important events.
The outmost goal is the development of distributed algorithms which have a low computational cost, and can easily adapt to changes in the robot swarm topology, while rely only on information from neighboring agents (and not demand global network information) in order to lead the agents to optimal positions.
In these cases the agents are equipped with radio transceivers in additions to their sensors, in order to exchange crucial information with neighboring nodes \cite{Stergiopoulos_IEEETAC15,Kantaros_Automatica15,Stergiopoulos_Hindawi12,Li_IETC09}.

Several distributed coordination algorithms have been developed, both for dynamic \cite{Zhai_Automatica2013} and static coverage problems \cite{Song_Automatica2013}.
The agents' sensors in the majority of the literature are assumed to have uniform radial sensing patterns \cite{Cortes_ESAIMCOCV05,Stergiopoulos_ACC09}.
There have been extensions amending for disc footprints of unequal radii \cite{Pimenta_CDC08,Stergiopoulos_IETCTA10} as well as arbitrary sensing ones \cite{Stergiopoulos_ICRA14}.
Although most commonly the region of interest is convex, non convex regions and/or domains containing obstacles have been studied \cite{Stergiopoulos_IEEETAC15}.

Control laws for area coverage \cite{Cortes_TRA04,Martinez_CSM07} can be enhanced to account for the positioning uncertainty of the agents' nodes based on the Guaranteed Voronoi tessellation \cite{Evans_CCCG2008} and a gradient ascent control law.

The article is organized as follows. The main assumptions, mathematical preliminaries and introduction to Guaranteed Voronoi partitioning are presented in Section \ref{section:problem_statement}. The Guaranteed Voronoi diagram when the nodes are assumed as disks is examined in detail in Section \ref{section:GV_disks} and a distributed control law is presented in Section \ref{section:distributed_control_law} that increased the network's coverage performance through its evolution. Simulation studies indicating the efficiency of the proposed scheme are presented in Section \ref{section:simulation_studies}, followed by concluding remarks.

\section{Problem Statement}
\label{section:problem_statement}

\subsection{Main assumptions - Preliminaries}
Let $\Omega \subset \mathbb{R}^2$ be a compact convex region under surveillance.
Assume a swarm of $n$ dimensionless mobile robots.
The robots' location is not known precisely but each robot is guaranteed to be within a positioning uncertainty circle $C_{i}^u$ with center $x_i$ and radius $r^u$
\begin{equation}
	C_{i}^u(x_i,r_u) = \left\{ x \in \Omega: \parallel x-x_i\parallel \leq r^u \right\},~i \in I_n,
	\label{dimension}
\end{equation}
where $\parallel \cdot \parallel$ corresponds to the Euclidean metric and $I_n = \left\{ 1, \dots ,n \right\}$.
The radius $r^u$ is common among all robots, that is all robots have a common positioning error.

The individual robot's motion is considered to be controlled via its velocities as
\begin{equation}
	\dot{x}_i=u_i,~~u_i \in \mathbb{R}^2,~ x_i \in \Omega,~~i \in I_{n},
	\label{kinematics}
\end{equation}
where $u_i$ is the corresponding control input for each robot~(node).

As far as the sensing performance of the nodes is concerned, all members are assumed to have identical range--limited uniform radial sensing footprints defined as
\begin{equation}
	C_{i}^{s}(x_i,r^s) = \left\{ x \in \Omega: \parallel x-x_i\parallel \leq r^s \right\},~i \in I_{n},
	\label{sensing}
\end{equation}
where $r^s$ is the common sensing radius of the members of the network.

We define the guaranteed sensing region of a node with positioning uncertainty $C_{i}^u$ and sensing footprint $C_{i}^{s}$ as
\begin{equation*}
	C_{i}^{gs}(C_{i}^u,C_{i}^{s}) = \left\{ \bigcap_{x_i} C_{i}^{s}(x_i,r^s), ~\forall x_i \in C_{i}^u \right\}, i \in I_{n}
\end{equation*}
Since both $C_{i}^u$ and $C_{i}^{s}$ are circular disks, it can be shown that
\begin{equation}
	C_{i}^{gs}(x_i,r^u,r^s) = \left\{ x \in \Omega: \parallel x-x_i\parallel \leq r^s-r^u \right\},~i \in I_{n},
	\label{guaranteed_sensing}
\end{equation}
where $x_i$ is the center of $C_{i}^u$.

\subsection{Voronoi Diagram}

When the nodes' positions are known precisely, that is $(r^u = 0, i \in I_n)$, the space can be assigned among the nodes via the Voronoi tessellation \cite{Aurenhammer_ELSEVIER99}.
The responsibility region (Voronoi cell) for an agent is defined as the part of the space that is closer to that agent than any other agent of the team (in the Euclidean sense) as
\begin{equation*}
	V_i=\left\{x\in\Omega\colon\left\|x-x_i\right\|
	\leq\left\|x-x_j\right\|,~~\forall
	j\in I_n,~j\neq i\right\},~~i\in I_n.
\end{equation*}
The Voronoi diagram for a set of 6 nodes is shown graphically in the left part of Figure \ref{fig:V_GV_comparison}.

\begin{figure}[htb]
	\centering
	\includegraphics[width=0.23\textwidth]{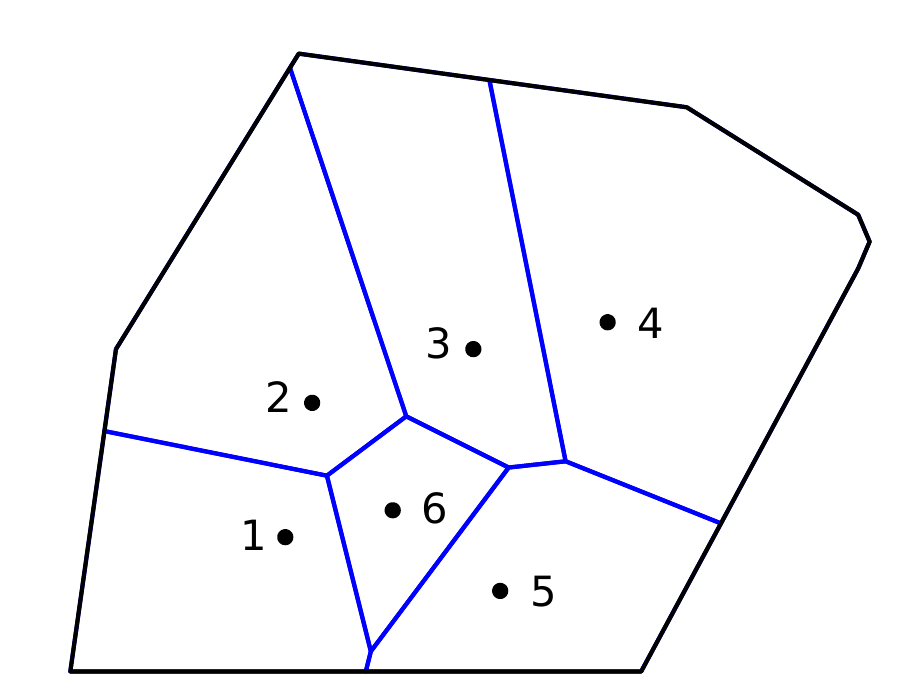}\hspace{0.01cm}
	\includegraphics[width=0.23\textwidth]{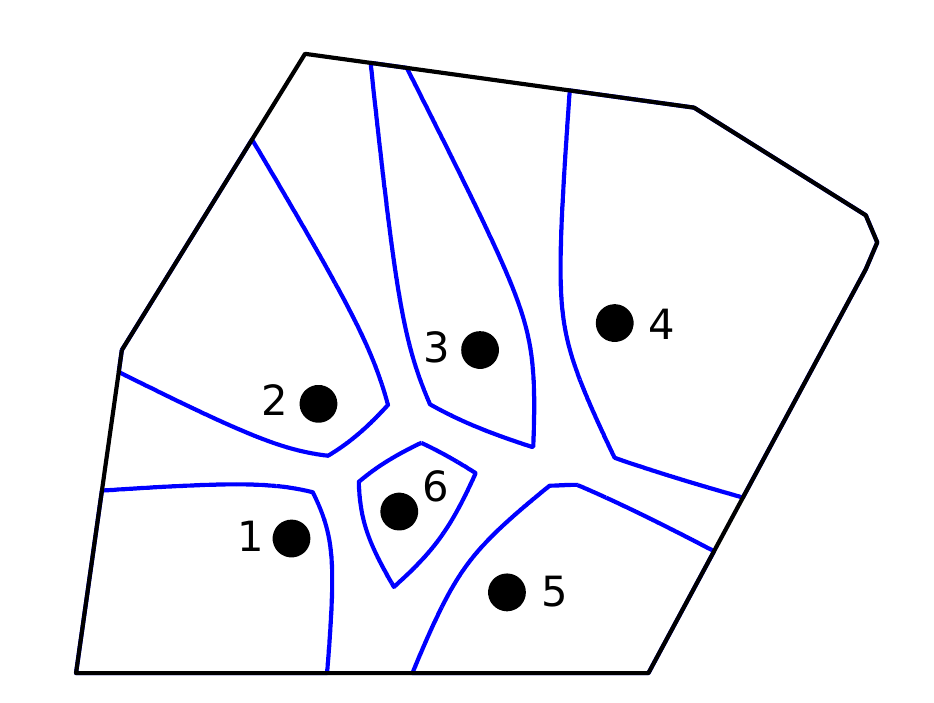}
	\caption{Voronoi diagram (left) for 6 dimensionless nodes, with its equivalent Guaranteed Voronoi diagram (right) in the case of disks (centered on those points).}
	\label{fig:V_GV_comparison}
\end{figure}

The main tessellation properties of the Voronoi diagram that are of particular interest are summarized as: a) $\bigcup_{i\in I_n}V_i=\Omega$, and b) $\mathrm{Int}\left(V_i\right)\cap\mathrm{Int}\left(V_j\right)=\varnothing,~\forall
i,j\in I_n,~i\neq j$, where
$\mathrm{Int}\left(\cdot\right)$ is the interior of the set--argument.
Equivalently to the Voronoi diagram, one can define the Delaunay neighbors of a node $i$, denoted as $N_i$, as
\begin{equation}
 N_i = \left\{j\in I_n,~j\neq i \colon V_i\cap V_j \neq \emptyset\right\},~i\in I_n,
\label{delaunay}
\end{equation}
which are the nodes of the network whose Voronoi cells share an edge with that of node $i$.

\subsection{Guaranteed Voronoi Diagram}
\label{guaranteed_voronoi_cell}

The Guaranteed Voronoi diagram (GV) is defined for a set of uncertain regions $D = \left\{ D_1, \ldots, D_n, ~~ D_i \subset \mathbb{R}^2 \right\}$, i.e. each uncertain region contains all the possible locations of a point $x_i \in \mathbb{R}^2$.
A cell $V_i^g$ is assigned at each uncertain region so that it contains the points of the plane that are closer to $x_i$ for any possible configuration of the points $x_1,\ldots,x_n ~i \in I_n$.
Thus the set of points in the plane that are at least as close to region $D_i$ as region $D_j$, in a guaranteed sense, is
\small
\begin{equation}
	H_{ij}^g = \left\{ x\in\Omega\colon \left\|x-x_i\right\|
	\leq\left\|x-x_j\right\|, ~\forall x_i \in D_i, ~\forall x_j \in D_j
	\right\}.
\end{equation}
\normalsize
The cell of the set $D_i$ is then defined as the intersection of all $H_{ij}^g$ as
\begin{eqnarray}
	\nonumber
	V_i^g &=& \bigcap_{j \neq i} H_{ij}^g\\
	\nonumber
	&=& \left\{x\in\Omega \colon \right. \max \left\|x-x_i\right\|  \leq \min \left\|x-x_j\right\| \\
	&&\forall j\in  I_n,\left. ~j\neq i, ~x_i \in D_{i},~x_j \in D_{j}\right\}
	\label{GV_cell}
\end{eqnarray}

Similarly to the Delaunay neighbors $N_i$, we can define the Guaranteed Delaunay Neighbors $N_i^g$ as
\begin{equation}
N_i^g = \left\{ j \in I_n, j \neq i \colon \partial H_{ij} \cap \partial V_i^g \neq \emptyset \right\}
\label{guaranteed_delaunay}
\end{equation}
which are the nodes of the network that influence $\partial V_i^g$.

\begin{remark}
	\label{neutral_region}
	The GV--diagram has the following properties:
	a) $\bigcup_{i\in I_n}V_i^g\subseteq\Omega$, and
	b) $V_i^g \subseteq V_i ~~\forall i \in I_{n}$.
	Since the GV is not a complete tessellation of the space $\Omega$, let the neutral zone $\mathcal{O} \stackrel{\triangle}{=} \Omega \setminus \left( \cup_{i \in I_n} V_i^g \right)$ correspond to the set of points of the space under consideration not assigned at any node of the network.
\end{remark}

\section{Guaranteed Voronoi diagram of disks}
\label{section:GV_disks}

When the uncertain regions $D_i$ are disks, that is $D_i=C_{i}^{gs}(x_i,r^u,r^s)$, it is shown \cite{Evans_CCCG2008} that $\partial H_{ij}^g$ and $\partial H_{ji}^g$ are the two branches of a hyperbola with foci at $x_i$ and $x_j$ and eccentricity $e = \frac{ \left\|x_i-x_j\right\|}{ r_i^d+r_j^d }=\frac{ \left\|x_i-x_j\right\|}{2r^d}$.
Graphically, the Guaranteed Voronoi diagram for a set of 6 disks is shown in the right part of Figure \ref{fig:V_GV_comparison}, compared to the one for the case of non--dimensional nodes (Voronoi diagram, at left part).

\begin{remark}
	In the particular case of uncertain disks, $\partial H_{ij}^g$ and $\partial H_{ji}^g$ are the two branches of the same hyperbola (even when $r_i^d \neq r_j^d$) and are thus symmetric with respect to the perpendicular bisector of $x_i x_j$.
\end{remark}

\begin{remark}
	In the case of uncertain disks, it has been proven \cite{Evans_CCCG2008} that $N_i^g \subseteq N_i$ with $N_i^g = N_i$ being always true when $\Omega = \mathbb{R}^2$.
	It is clear from Figure \ref{fig:V_GV_comparison} that while nodes 1 and 5 are Delaunay neighbors in the Voronoi diagram, they aren't Guaranteed Delaunay neighbors in the GV diagram when it is calculated for a region $\Omega$ of finite size.
	Since there are already simple $O(n \log_2 n)$ algorithms for the computation of $N_i$, whereas the computation of $N_i^g$ is more complex, the Delaunay neighbors can be used for the creation of the Guaranteed Voronoi diagram.
	\label{rem:delaunay}
\end{remark}

The dependence of the GV cells of two disks on the distance of their centers $d_{ij} = d(x_i,x_j)$ is as follows.
When the disks $C_i, C_j$, overlap both of the cells $V_i^g, V_j^g$ are empty as seen in Fig \ref{fig:cells_distance} (a).
When the disks $C_i, C_j$ are outside tangent, the resulting cells are rays starting from the centers of the disks $x_i, x_j$ and extending along the direction of $x_ix_j$ as seen in Fig \ref{fig:cells_distance} (b).
When the disks are disjoint, the GV cells are bounded by the two branches of a hyperbola.
As $d_{ij}$ increases further, the eccentricity of the hyperbola increases and the distance of the disk centers from the hyperbola vertices (the points of the hyperbola closest to it's center) increases as seen in Fig \ref{fig:cells_distance} (c), (d).
This results in an increase of the area of $V_i^g$ covered by $C_{i}^{gs}$.
The distance between the hyperbola's vertices remains constant at $2r^d$ as $d_{ij}$ increases.

The GV cells of two disks also depend on the sum of their radii $r_i^u + r_j^u$ as seen in Figure \ref{fig:cells_dimensions}.
As $r_i^u + r_j^u$ decreases, the eccentricity of the hyperbola increases and the result on the cells is the same as if the distance of the disks' centers increased, as explained previously.
When $r_i^u + r_j^u = 0$, the GV cells are the classic Voronoi cells.

\begin{figure}[htb]
	\centering
	\subfloat[]{ \includegraphics[width=0.23\textwidth]{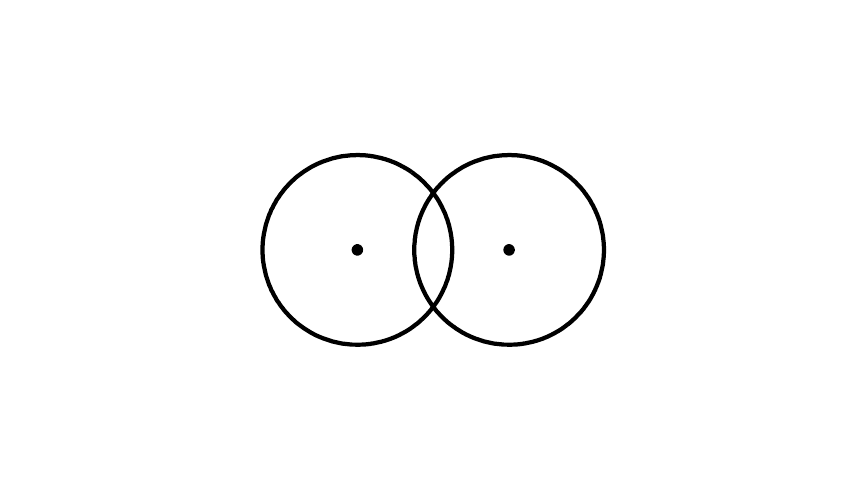} }
	\subfloat[]{ \includegraphics[width=0.23\textwidth]{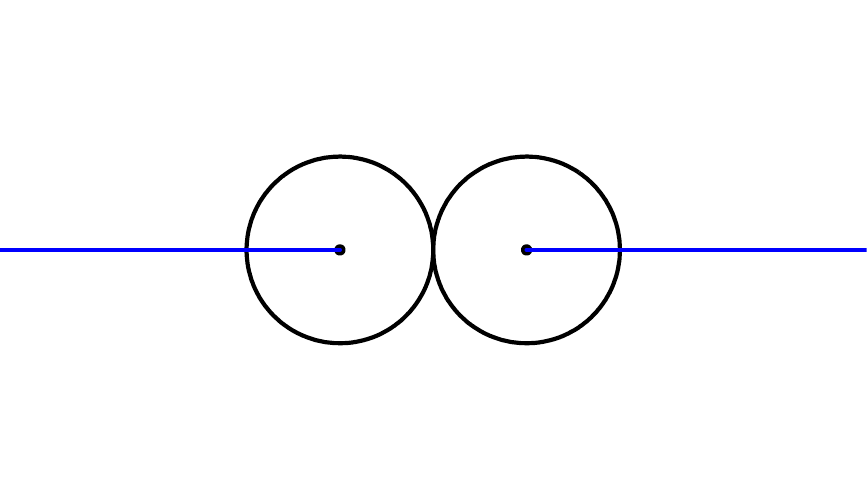} }\\
	\subfloat[]{ \includegraphics[width=0.23\textwidth]{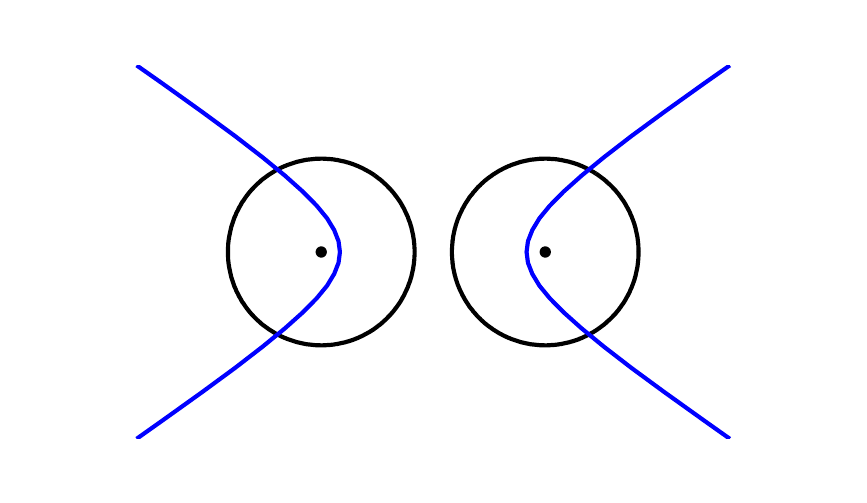} }
	\subfloat[]{ \includegraphics[width=0.23\textwidth]{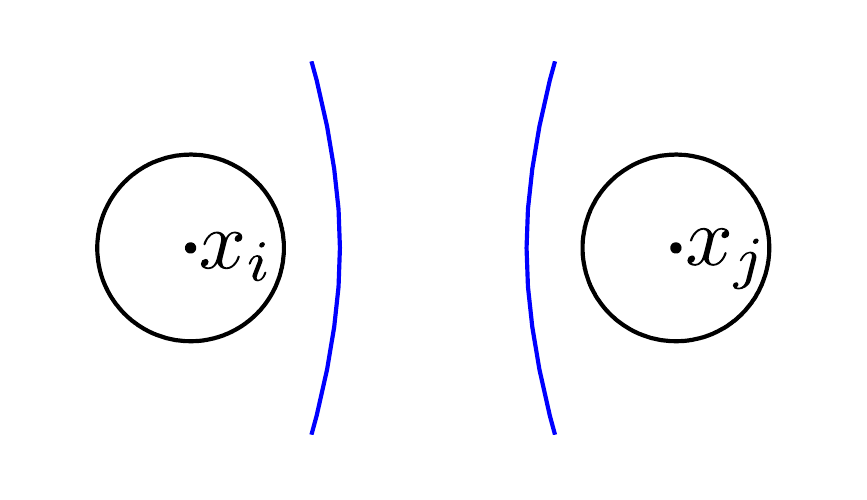} }
	\caption{Dependence of the GV cells on $\left\|x_i-x_j\right\|$.}
	\label{fig:cells_distance}
\end{figure}

\begin{figure}[htb]
	\centering
	\includegraphics[width=0.155\textwidth]{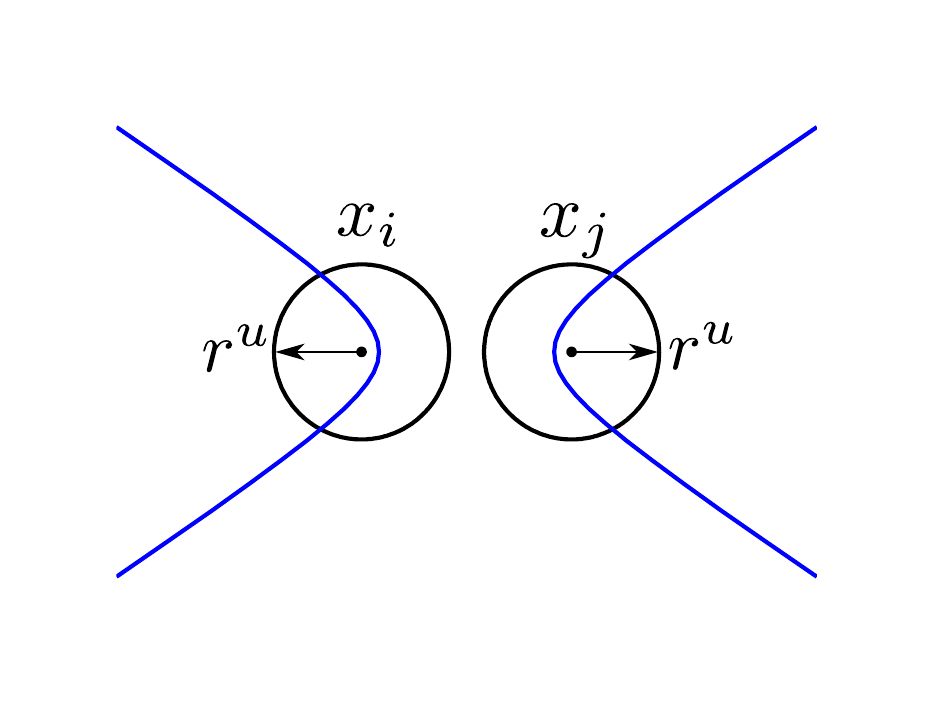}\hspace{0.01cm}
	\includegraphics[width=0.155\textwidth]{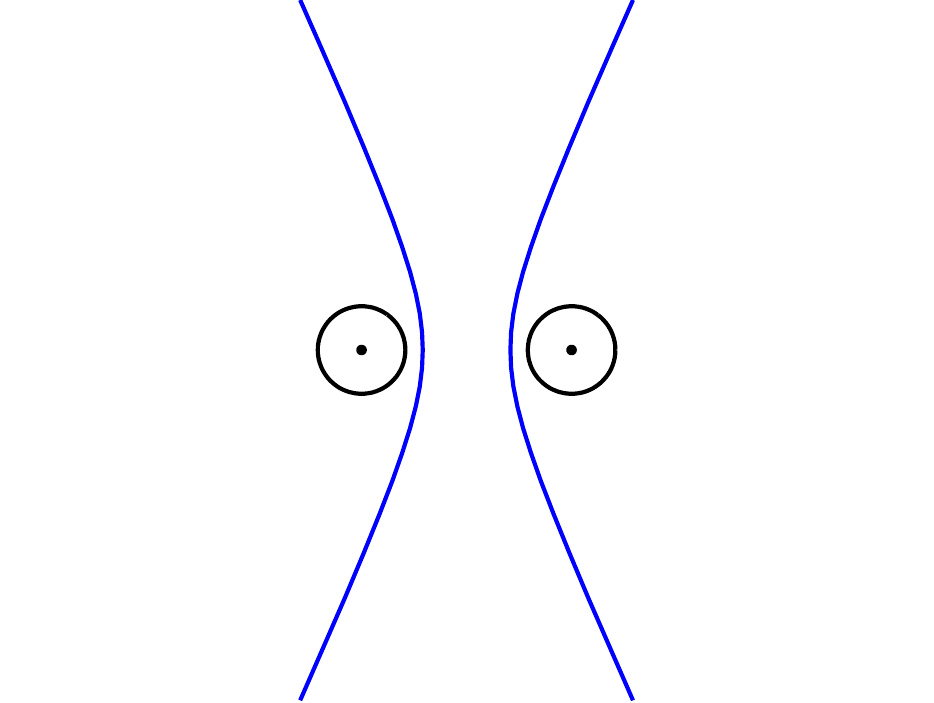}\hspace{0.01cm}
	\includegraphics[width=0.155\textwidth]{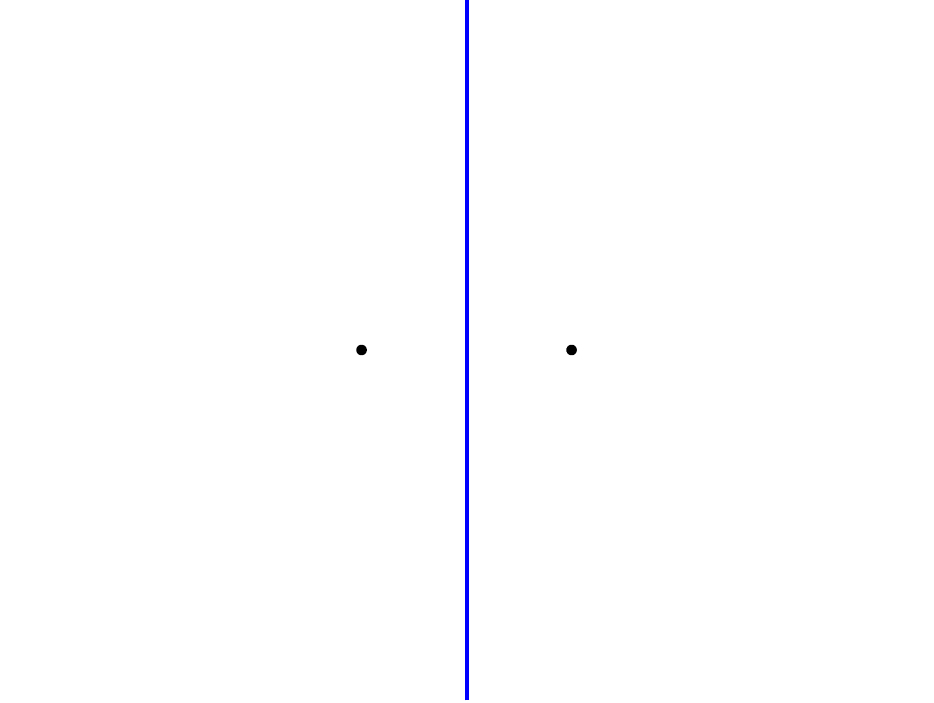}
	\caption{Dependence of the GV cells on $r_i^u + r_j^u$.}
	\label{fig:cells_dimensions}
\end{figure}

With this assumption the area of the total region of $\Omega$ surveyed by the network is
\begin{equation}
	\mathcal{H}=\mathcal{A}\left(\Omega\cap\bigcup_{i\in I_n}C_{i}^{gs}\right)=\sum_{i\in I_n}\mathcal{A}\left(V_i^{gs}\right),
	\label{coverage_criterion}
\end{equation}
where $\mathcal{A}(\cdot)$ is the area function of the set--argument, while the sets $V_i^{gs}$ defined as
\begin{equation}
	V_i^{gs}=V_i^g\cap C_{i}^{gs},~~i\in I_n,
\label{guaranteed sensed cells}
\end{equation}
are the parts of the guaranteed Voronoi cells that are sensed by the nodes themselves.

\section{Spatially Distributed Coordination Algorithm}
\label{section:distributed_control_law}

Based on the kinematic model of the nodes (\ref{kinematics}), their guaranteed sensing regions (\ref{guaranteed_sensing}), and the coverage--oriented criterion under consideration (\ref{coverage_criterion}), a distributed control action is designed in order to lead the mobile team towards a coverage optimal configuration.
The control development takes into account the way that the space is partitioned and assigned among the nodes, which relies on the nodes' positioning uncertainty.

\subsection{Coverage control law development}
Let the ``guaranteed coverage objective" be written in integral form as
\small
\begin{equation}
	\mathcal{H} =
		\sum\limits_{i\in I_n}\int_{V_i^{gs}}\phi\left(x\right)\mathrm{d}x=
		\sum\limits_{i\in I_n}\int_{V_i^{g} \cap C_i^{gs}}\phi\left(x\right)\mathrm{d}x=
		\sum\limits_{i\in I_n}\mathcal{H}_{i},
	\label{Hgc decomposition}
\end{equation}
\normalsize
expressing the total area covered by the nodes in their cells of responsibility, assigned via Guaranteed Voronoi partitioning (\ref{GV_cell}).
The function $\phi\colon \mathbb{R}^2\rightarrow \mathbb{R}_+$ is related to the a--priori knowledge of importance of a point $x\in\Omega$ indicating the probability of an event to take place at $x$ in a coverage scenario.

\begin{theorem}
	Considering a mobile sensor network consisting of nodes with uniform range--limited radial performance as in (\ref{sensing}), governed by the individual robot's kinodynamics described in (\ref{kinematics}), and the GV--partitioning of $\Omega$ defined in (\ref{guaranteed_voronoi_cell}), the coordination scheme
	\begin{eqnarray}
		\nonumber
		u_i &=& \alpha_i \int_{\partial V_i^{gs} \cap\partial C_i^{gs}}n_i\,\phi\,\mathrm{d}x + \\
		&& \hspace{-2cm} \alpha_i  \sum_{j\in N_i^g}\left[\int_{\partial V_i^{gs} \cap \partial H_{ij}} \tilde{\upsilon}_i^i\,\tilde{n}_i\,\phi\,\mathrm{d}x + \int_{\partial V_j^{gs} \cap \partial H_{ji}} \tilde{\upsilon}_j^i\,\tilde{n}_j\,\phi\,\mathrm{d}x\right]
		\label{control_law}
	\end{eqnarray}
	where $n_i$ is the outward unit normal on $\partial V_i^{gs}$ and $\alpha_i$ a positive constant, maximizes the performance criterion (\ref{Hgc decomposition}) along the nodes' trajectories in a monotonic manner, leading in a locally area--optimal configuration of the network.
	\label{ch4:thm:proposed law}
\end{theorem}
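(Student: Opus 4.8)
The plan is to recognize the coordination scheme (\ref{control_law}) as the per-node gradient of the objective, i.e.\ to show that the leading and bracketed integrals together equal $\partial\mathcal{H}/\partial x_i$, so that $u_i=\alpha_i\,\partial\mathcal{H}/\partial x_i$ is gradient ascent. Differentiating (\ref{Hgc decomposition}) along the trajectories and substituting the kinematics (\ref{kinematics}) gives
\begin{equation*}
	\dot{\mathcal{H}}=\sum_{i\in I_n}\frac{\partial\mathcal{H}}{\partial x_i}\cdot\dot{x}_i=\sum_{i\in I_n}\frac{\partial\mathcal{H}}{\partial x_i}\cdot u_i=\sum_{i\in I_n}\alpha_i\left\|\frac{\partial\mathcal{H}}{\partial x_i}\right\|^2\ge 0,
\end{equation*}
where the last equality uses the identification above and $\alpha_i>0$. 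This establishes the claimed monotonicity; since $\Omega^n$ is compact and $\mathcal{H}$ is bounded, LaSalle's invariance principle then drives the swarm to the set where every $\partial\mathcal{H}/\partial x_i$ vanishes, i.e.\ to a locally area--optimal configuration. The entire content therefore reduces to computing $\partial\mathcal{H}/\partial x_i$ and matching it term--by--term with (\ref{control_law}).

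To carry out that computation I would note that $x_i$ enters $\mathcal{H}$ both through its own sensed cell $V_i^{gs}$ and, via the shared hyperbolic boundaries, through each guaranteed--Delaunay neighbor's cell $V_j^{gs}$. Applying the Leibniz (Reynolds transport) rule to each affected integral,
\begin{equation*}
	\frac{\partial}{\partial x_i}\int_{W(x_i)}\phi\,\mathrm{d}x=\int_{\partial W}\phi\left(\frac{\partial x}{\partial x_i}\right)^{\!\top}n\,\mathrm{d}\ell,
\end{equation*}
with $n$ the outward normal, I would split $\partial V_i^{gs}$ into the arcs on $\partial C_i^{gs}$, the hyperbolic arcs on the various $\partial H_{ij}$, and the portions on $\partial\Omega$. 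On $\partial C_i^{gs}$ the guaranteed--sensing disk translates rigidly, so $\partial x/\partial x_i=I$ and the integrand collapses to $n_i\phi$, yielding the leading term of (\ref{control_law}); the portions on $\partial\Omega$ are fixed and contribute nothing. The hyperbolic arcs of $V_i^{gs}$ produce the $\partial H_{ij}$ integral, while the same displacement of $x_i$ also sweeps the conjugate branch $\partial H_{ji}$ bounding $V_j^{gs}$, producing the $\partial H_{ji}$ integral. This is exactly why $u_i$ depends on both its own and its neighbors' boundaries; and, crucially, unlike the classical Voronoi case these neighbor contributions do not cancel, since the cells $V_i^g$ and $V_j^g$ are separated by the neutral zone $\mathcal{O}$ rather than sharing an edge.

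The hard part will be the Jacobian factors $\tilde{\upsilon}_i^i$ and $\tilde{\upsilon}_j^i$ on the hyperbolic arcs. A point of $\partial H_{ij}$ depends on $x_i$ non--rigidly, through both foci and the eccentricity $e=\|x_i-x_j\|/2r^d$, so its Jacobian with respect to $x_i$ is not the identity and must be derived from the implicit hyperbola equation, then contracted with the arc's unit normal $\tilde{n}_i$ to form $\tilde{\upsilon}_i^i\tilde{n}_i$ (and likewise $\tilde{\upsilon}_j^i\tilde{n}_j$ on $\partial H_{ji}$). I expect this differentiation, together with tracking the motion of the arc endpoints where a hyperbolic arc meets $\partial C_i^{gs}$ or $\partial\Omega$, to be the only laborious step; the endpoint--velocity terms must be shown to cancel across adjacent arcs, or to vanish by continuity of $\phi$ and the gap--free meeting of the boundary pieces, so that no spurious contributions survive. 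Once the gradient is assembled and matched with (\ref{control_law}), the monotonicity and local--optimality assertions follow from the gradient--ascent identity displayed at the start.
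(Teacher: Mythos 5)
Your proposal follows essentially the same route as the paper's proof: identify the control law as $\alpha_i\,\partial\mathcal{H}/\partial x_i$, apply the Leibniz rule, decompose $\partial V_i^{gs}$ into the arcs on $\partial\Omega$ (zero Jacobian), on $\partial C_i^{gs}$ (identity Jacobian), and on the neutral-zone hyperbolic arcs, pair the latter with the neighbors' conjugate branches $\partial H_{ji}$, and conclude monotonicity from $\dot{\mathcal{H}}=\sum_i\partial\mathcal{H}/\partial x_i\cdot u_i\geq 0$. Your additional remarks on the LaSalle argument and on the endpoint-velocity terms are refinements the paper leaves implicit, but the core argument is the same, and like the paper you defer the explicit Jacobian computation on the hyperbolic arcs to the parametric representation of the branches.
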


\begin{proof}
	Considering (\ref{Hgc decomposition}) and taking into account that the sets $V_i^{gs},~i\in I_n$ are mutually disjoint, the partial derivative of $\mathcal{H}$ with respect to $x_i$ is written as
	\[\frac{\partial\mathcal{H}}{\partial x_i}=
	\frac{\partial}{\partial x_i}\int_{V_i^{gs}}\phi  dx+ \sum_{j \in N_i^g}\frac{\partial}{\partial x_i}\int_{V_j^{gs}}\phi dx.\]
	Considering the second summation term, infinitesimal motion of $x_i$ may only affect $\partial V_j^{gs}$ at $\partial V_j^{gs} \cap \partial \mathcal{O}$, where $\mathcal{O}$ is the neutral area defined in Remark \ref{neutral_region}, since both hyperbola branches are affected by alteration of one of the foci. In addition, only the Guaranteed Delaunay neighbors (\ref{guaranteed_delaunay}) of $i$ are considered in the summation, as a major property of GV--partitioning. Therefore, the former expression can be written via the generalized Leibniz integral rule \cite{Flanders_AMM73} (by converting surface integrals to line ones) as
	\[\frac{\partial \mathcal{H}}{\partial x_i}=
	\int_{\partial V_i^{gs}}\tilde{\upsilon}_i^i\,n_i\,\phi dx
	+
	\sum_{j \in N_i^g}\int_{\partial V_j^{gs} \cap \partial \mathcal{O}} \tilde{\upsilon}_j^i\,n_j\,\phi dx,\]
	where $\tilde{\upsilon}_i^i,\tilde{\upsilon}_j^i$ stand for the transpose Jacobian matrices with respect to $x_i$ of the points $x\in \partial V_i^{gs}$, $x\in \partial V_j^{gs}$, respectively, i.e.
	\begin{equation}
		\tilde{\upsilon}_j^i\left(x\right) \stackrel{\triangle}{=} { \frac{\partial x}{\partial x_i} }^T,~~x\in \partial \tilde{V}_j^{gs},~i,j\in I_n.
		\label{ch3:eq:upsilon tilde}
	\end{equation}
	The boundary $\partial \tilde{V}_i^{gs}$ can be decomposed in disjoint sets as
	\begin{equation}
	\hspace{-0.05cm}\partial V_i^{gs}=
	\left\{\partial V_i^{gs}\cap\partial\Omega\right\}
	\cup
	\left\{\partial V_i^{gs}\cap\partial C_i^{gs}\right\}
	\cup
	\left\{\partial V_i^{gs} \cap \partial \mathcal{O} \right\}.
	\end{equation}
	These sets represent the parts of $\partial V_i^{gs}$ that lie on the boundary of $\Omega$, the boundary of the node's sensing region, and the boundary of the unassigned region of $\Omega$, respectively. This decomposition is shown in Figure \ref{fig:boundary_decomposition} with $\partial V_i^{gs}\cap\partial\Omega$ in green, $\partial V_i^{gs}\cap\partial C_i^{gs}$ in solid red and $\partial V_i^{gs} \cap \partial \mathcal{O}$ in solid blue.

	\begin{figure}[htb]
		\centering
		\includegraphics[width=0.4\textwidth]{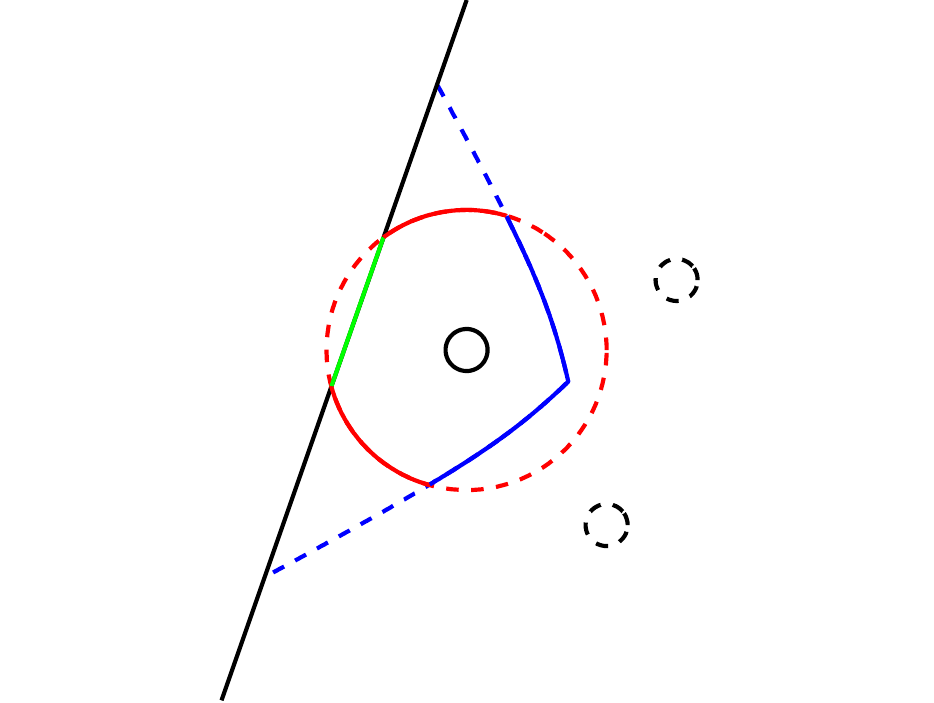}
		\caption{The decomposition of $\partial V_i^{gs}$ into disjoint sets (solid red, green and blue).}
		\label{fig:boundary_decomposition}
	\end{figure}

	Hence $\frac{\partial \mathcal{H}}{\partial x_i}$ can be written as
	\begin{eqnarray}
		\frac{\partial \mathcal{H}}{\partial x_i} &=&
		\int_{\partial V_i^{gs} \cap \partial \Omega}\tilde{\upsilon}_i^i\,\tilde{n}_i\,\phi\,\mathrm{d}x +
		\int_{\partial V_i^{gs} \cap \partial C_i^{gs}}\tilde{\upsilon}_i^i\,\tilde{n}_i\,\phi\,\mathrm{d}x + \nonumber \\
		&& \hspace{-1cm} \int_{\partial V_i^{gs} \cap \partial \mathcal{O}} \tilde{\upsilon}_i^i\,\tilde{n}_i\,\phi\,\mathrm{d}x +
		\sum_{j \in N_i^g}\int_{\partial V_j^{gs} \cap \partial \mathcal{O}} \tilde{\upsilon}_j^i\,\tilde{n}_j\,\phi\,\mathrm{d}x.
	\end{eqnarray}
	It is apparent that $\tilde{\upsilon}_i^i=0$ at $x\in \partial V_i^{gs}\cap\Omega$ since all $x\in\partial \Omega$ remain unaltered by infinitesimal motions of $x_i$, assuming no alteration of the environment.
	Considering the second integral, for any point $x \in \partial\tilde{V}_i^{gs}\cap\partial C_i^{gs}$ it holds that $\tilde{\upsilon}_i^i\left(x\right)=\mathbb{I}_2$, where $\mathbb{I}$ stands for the identity matrix, since they translate along the direction of motion of $x_i$ at the same rate.
	As far as the sets $\partial V_i^{gs} \cap \partial \mathcal{O}$ and $\partial V_j^{gs} \cap \partial \mathcal{O},~j\in N_i^g$ are concerned, they can be merged in pairs via utilization of the left and right hyperbolic branches, as introduced in GV--partitioning (\ref{GV_cell}), as follows
	\begin{eqnarray}
		\nonumber
		\int_{\partial V_i^{gs} \cap \partial \mathcal{O}} \tilde{\upsilon}_i^i\,\tilde{n}_i\,\phi\,\mathrm{d}x +
		\sum_{j \in N_i^g}\int_{\partial V_j^{gs} \cap \partial \mathcal{O}} \tilde{\upsilon}_j^i\,\tilde{n}_j\,\phi\,\mathrm{d}x= \\
		\sum_{j \in N_i^g}\left[\int_{\partial V_i^{gs} \cap \partial H_{ij}} \tilde{\upsilon}_i^i\,\tilde{n}_i\,\phi\,\mathrm{d}x + \int_{\partial V_j^{gs} \cap \partial H_{ji}} \tilde{\upsilon}_j^i\,\tilde{n}_j\,\phi\,\mathrm{d}x\right]
	\end{eqnarray}

	However, for any two Delaunay neighbors $i,j$ it can be observed that
	\begin{itemize}
		\item The hyperbolic branches $\partial H_{ij}$ and $\partial H_{ji}$ are symmetric with respect to the perpendicular bisector of $x_i,x_j$, and are governed by the same set of parametric equations (left and right branch).
		\item The vectors $n_j$ are mirrored images of the corresponding $n_i$ with respect to the perpendicular bisector of $x_i,x_j$.
	\end{itemize}

	Taking into account the above, $\frac{\partial\mathcal{H}}{\partial x_i}$ can be written as
	\begin{eqnarray}
		\nonumber
		\frac{\partial\mathcal{H}}{\partial x_i}&=&\int_{\partial V_i^{gs} \cap\partial C_i^{gs}}n_i\,\phi\,\mathrm{d}x + \\
		&& \hspace{-2cm}\sum_{j \in N_i^g}\left[\int_{\partial V_i^{gs} \cap \partial H_{ij}} \tilde{\upsilon}_i^i\,\tilde{n}_i\,\phi\,\mathrm{d}x + \int_{\partial V_j^{gs} \cap \partial H_{ji}} \tilde{\upsilon}_j^i\,\tilde{n}_j\,\phi\,\mathrm{d}x\right]
		\label{criterion_derivative}
	\end{eqnarray}

	The unit normal vectors $n_i,n_j$ and Jacobian matrices $\upsilon_i^i,\upsilon_j^i$ in the second part of (\ref{criterion_derivative}) can be evaluated via utilization of the parametric representations of the sets over which the integration takes place.
	These sets are parts of the left ($\partial H_{ij}$) and right ($\partial H_{ji}$) branch of the hyperbola that assigns the space among two arbitrary nodes $i,j$.
	The second term of the sum in (\ref{criterion_derivative}) requires knowledge of all the nodes in $\bigcup_{j \in N_i^g} N_j^g \subseteq I_n$ and so the control law is distributed.

	The decomposition of the set $\partial V_i^{gs} \cap \partial \mathcal{O}$ of node $i$ into mutually disjoint hyperbolic arcs $\partial V_i^{gs} \cap \partial H_{ij}$ and $\partial V_i^{gs} \cap \partial H_{ik}$ is shown in Figure \ref{fig:control_decomposition}.
	Figure \ref{fig:control_decomposition} also shows the other domains of integration used in (\ref{criterion_derivative}) $\partial V_i^{gs} \cap \partial C_{i}^{gs}$, $\partial V_j^{gs} \cap \partial H_{ji}$ and $\partial V_k^{gs} \cap \partial H_{ki}$.

	\begin{figure}[htb]
		\centering
		\includegraphics[width=0.4\textwidth]{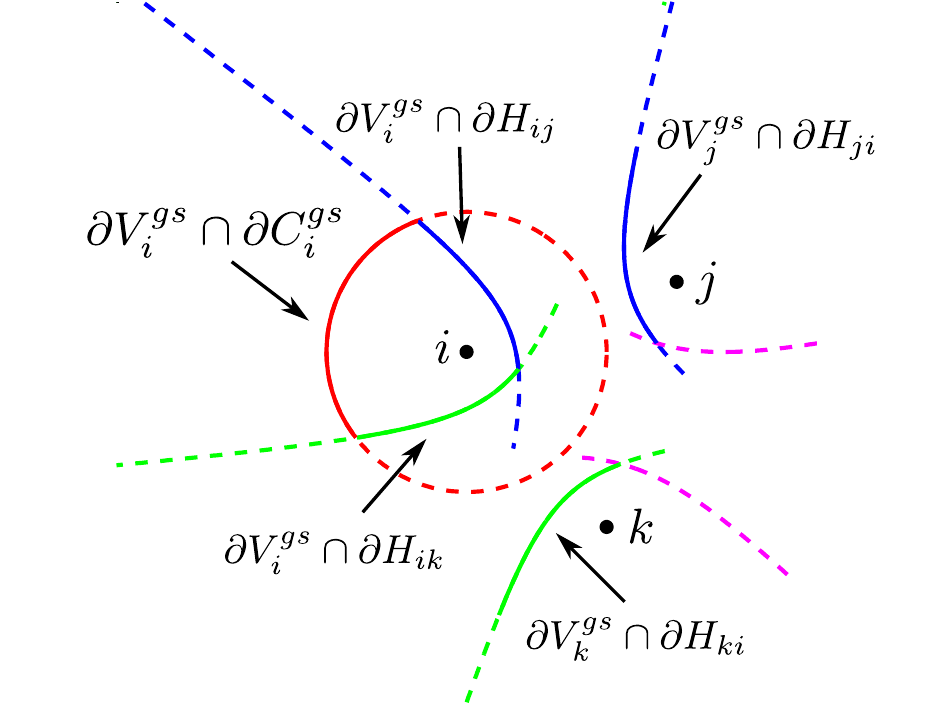}
		\caption{The domains of integration of the control law (\ref{control_law}).}
		\label{fig:control_decomposition}
	\end{figure}

	The proposed law (\ref{control_law}) leads to a gradient flow of $\mathcal{H}$ along the nodes trajectories, while $\mathcal{H}$ increases monotonically, since
	\[\frac{d\mathcal{H}}{dt}=\sum_{i\in I_n}\frac{\partial\mathcal{H}}{\partial x_i}\cdot \dot{x}_i=\sum_{i\in I_n}\left\|u_i\right\|^2\geq 0.\]
\end{proof}

\begin{remark}
	The resulting control law further distances nodes $i$ and $j$, since these move in a direction away from the bisector of $x_ix_j$.

	Consider the example of Figure \ref{fig:collision_avoidance} where the dotted hyperbolic branch of node $i$ has been moved so that its relative position with the node's center is the same in order to show the change in sensed area.
	It can easily be shown that the integral over $\partial V_i^{gs} \cap\partial C_i^{gs}$ of the control law will always result in a direction of movement for node $i$ away from the bisector of $x_ix_j$.
	It is also evident from the same figure that as node $i$ moves away from node $j$, the area of their cells $V_i^g, V_j^g$ also increases.
	As a result the covered area of both agents $V_i^{gs}$ and $V_j^{gs}$ increases and the integrals over $\partial V_i^{gs} \cap \partial H_{ij}$ and $\partial V_j^{gs} \cap \partial H_{ji}$ result in a direction of movement for agent $i$ away from the bisector, since the control law guarantees that each node's covered area will increase monotonically.

	\begin{figure}[htb]
		\centering
		\includegraphics[width=0.23\textwidth]{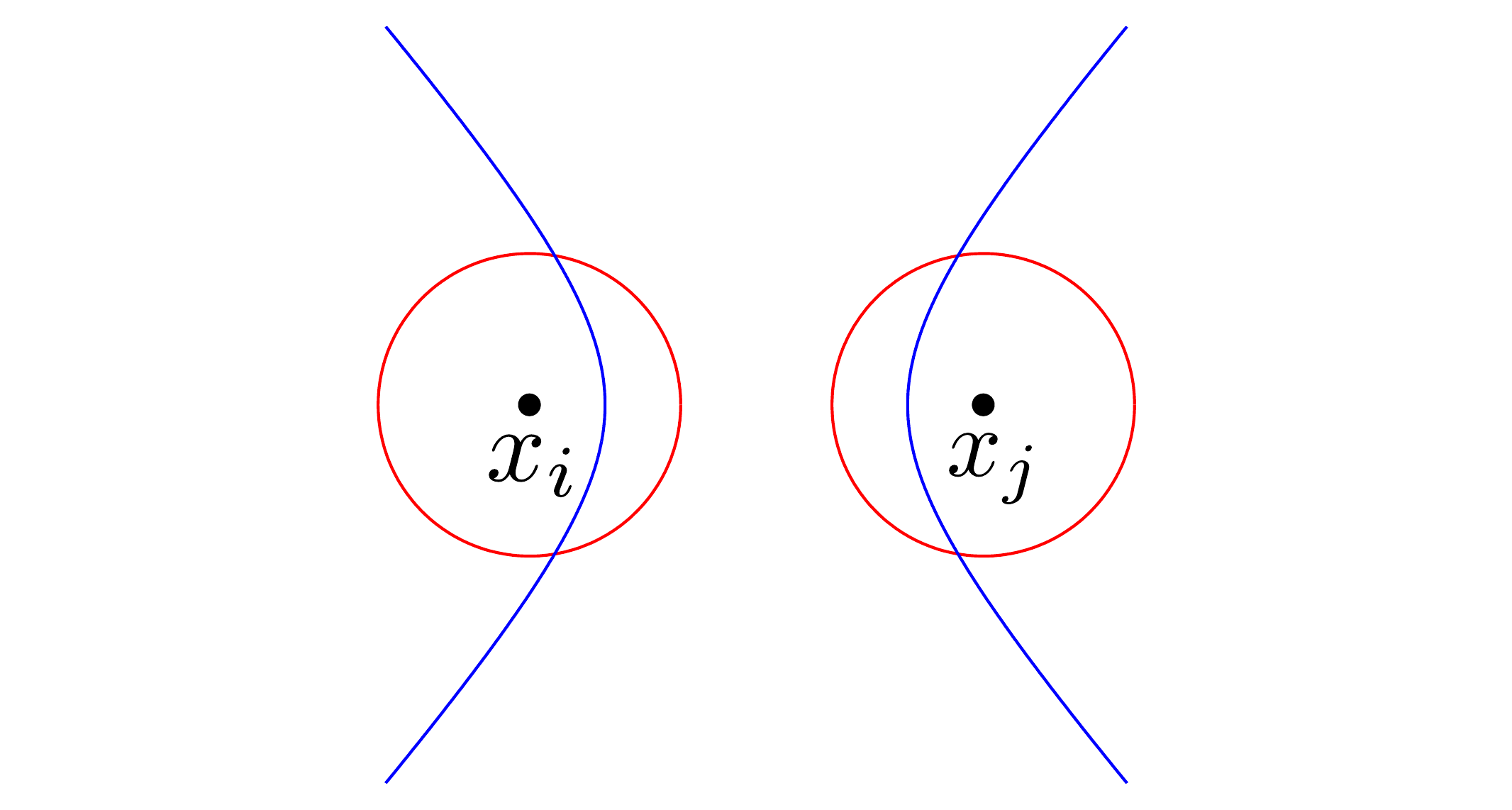}\hspace{0.01cm}
		\includegraphics[width=0.23\textwidth]{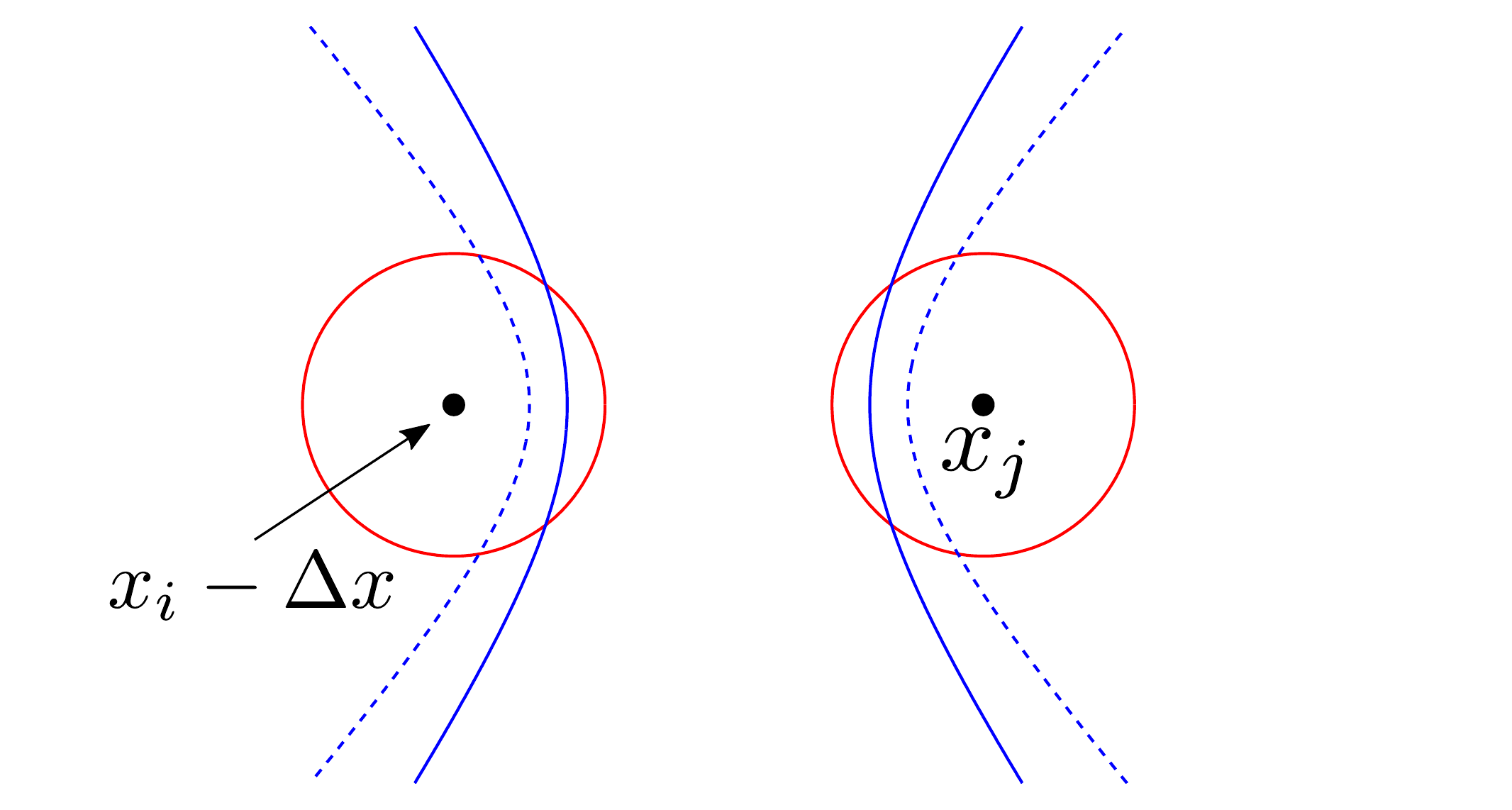}
		\caption{Sensed area change for a relative motion of the left node.}
		\label{fig:collision_avoidance}
	\end{figure}
\end{remark}

\begin{remark}
	The proposed control law can adapt to insertion, removal or immobilization of one or more of the network's nodes due to its distributed nature.
	It can also adapt to simultaneous changes to all agents of their common sensing radius $r^s$.
\end{remark}

\begin{conjecture}
	The computationally efficient suboptimal control law
	\begin{equation}
		u_i = \alpha_i \int_{\partial V_i^{gs} \cap\partial C_i^{gs}}n_i\,\phi\,\mathrm{d}x
		\label{control_suboptimal}
	\end{equation}
	leads the nodes to suboptimal trajectories while maintaining monotonous coverage increase and collision avoidance.
\end{conjecture}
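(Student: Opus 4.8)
The plan is to reuse the gradient decomposition already established in the proof of Theorem~\ref{ch4:thm:proposed law} and to treat the suboptimal law \eqref{control_suboptimal} as a \emph{partial} gradient ascent that retains only the sensing--boundary component of $\frac{\partial\mathcal{H}}{\partial x_i}$ while discarding the hyperbolic neighbor terms. Writing $g_i^{(1)} = \int_{\partial V_i^{gs}\cap\partial C_i^{gs}} n_i\,\phi\,\mathrm{d}x$ and $g_i^{(2)} = \sum_{j\in N_i^g}[\cdots]$ for the two groups of terms appearing in \eqref{criterion_derivative}, so that $\left(\frac{\partial\mathcal{H}}{\partial x_i}\right)^T = g_i^{(1)} + g_i^{(2)}$, the suboptimal input is exactly $u_i = \alpha_i\,g_i^{(1)}$. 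Suboptimality itself is then immediate, since $u_i$ is not the full gradient and the equilibria of the reduced flow need not be critical points of $\mathcal{H}$; the substantive content to be proved is the monotonic coverage increase and the collision--avoidance property.

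For the monotonicity claim I would compute, as in the Theorem, $\frac{d\mathcal{H}}{dt} = \sum_{i\in I_n}\frac{\partial\mathcal{H}}{\partial x_i}\,\dot{x}_i = \sum_{i\in I_n}\alpha_i\big(\|g_i^{(1)}\|^2 + \langle g_i^{(2)}, g_i^{(1)}\rangle\big)$. The first summand is manifestly nonnegative, so everything reduces to controlling the sign of the cross term $\langle g_i^{(2)}, g_i^{(1)}\rangle$. My target would be to prove $\langle g_i^{(2)}, g_i^{(1)}\rangle \ge 0$ node by node, using the geometric facts recorded in the Remark following the Theorem: both the sensing--boundary integral $g_i^{(1)}$ and each neighbor's hyperbolic contribution push $x_i$ in a direction away from the perpendicular bisector of $x_ix_j$. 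Concretely, for a single Guaranteed Delaunay neighbor both vectors are essentially collinear with $\hat d_{ij}=(x_i-x_j)/\|x_i-x_j\|$ and point outward, so their inner product is positive; the task is then to quantify the opening angle and to extend the estimate to several neighbors, where $g_i^{(1)}$ is a single net vector while $g_i^{(2)}$ is a sum of per--neighbor outward vectors pointing in different directions.

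I expect the main obstacle to be exactly this cross term. Partial gradient ascent does not in general increase the objective, so monotonicity cannot follow from the variational structure alone; it must come from a uniform angular alignment between $g_i^{(1)}$ and $g_i^{(2)}$ that holds for every admissible configuration and for an arbitrary nonnegative weight $\phi$. The hard cases are near--symmetric configurations, with neighbors on opposite sides, where both vectors nearly vanish and the angle between them is poorly conditioned, and configurations with many neighbors spread around $x_i$, where the per--neighbor contributions to $g_i^{(2)}$ may partially cancel or point away from the net $g_i^{(1)}$. I would therefore first try to establish the bound $\langle g_i^{(2)}, g_i^{(1)}\rangle \ge 0$ under a separation hypothesis and only afterwards attempt to remove it; failing a global estimate, I would settle for the statement that $\frac{d\mathcal{H}}{dt}\ge 0$ holds on the positively invariant set of well--separated configurations, which is precisely the reason the result is stated as a conjecture rather than a theorem.

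For collision avoidance I would argue directly from the sign of $g_i^{(1)}$. By the cited Remark, $u_i=\alpha_i g_i^{(1)}$ always has a nonnegative component along $\hat d_{ij}$ for each neighbor $j$, i.e. each node moves away from its neighbors' bisectors. I would turn this into an invariance argument on the inter--center distances $d_{ij}=\|x_i-x_j\|$: as two uncertainty disks approach, the shared cell boundary forces the free sensing arc $\partial V_i^{gs}\cap\partial C_i^{gs}$ onto the side opposite $j$, so that $\frac{d}{dt}d_{ij}\ge 0$ whenever $d_{ij}$ is small enough for the cells to remain nonempty, which keeps $C_i^u$ and $C_j^u$ disjoint for all time. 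This part I expect to be routine once the directional property of $g_i^{(1)}$ is imported from the Remark.
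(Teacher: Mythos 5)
This statement is a \emph{conjecture} in the paper: no proof is given, only the two-sentence heuristic following it, namely that the discarded hyperbolic integrals of \eqref{control_law} point to the same side of the perpendicular bisector of $x_ix_j$ as the retained sensing-boundary term, so keeping only that term still moves node $i$ away from the bisector. Your proposal is therefore not competing with an actual proof, and its central reduction is a genuine sharpening of what the paper offers: writing $\left(\partial\mathcal{H}/\partial x_i\right)^T=g_i^{(1)}+g_i^{(2)}$ and observing that $\frac{d\mathcal{H}}{dt}=\sum_{i}\alpha_i\bigl(\|g_i^{(1)}\|^2+\langle g_i^{(2)},g_i^{(1)}\rangle\bigr)$ isolates exactly the quantity whose sign is in question, whereas the paper only gestures at per-neighbor directional alignment. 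You are also right about where the argument stalls: partial gradient ascent does not increase the objective in general, and the per-neighbor statement that each hyperbolic contribution points away from the bisector of $x_ix_j$ does not obviously survive summation over several neighbors distributed around $x_i$, nor is it clear that it holds uniformly for arbitrary nonnegative $\phi$ or near configurations where both vectors nearly vanish. Since you explicitly defer this step and correctly identify it as the reason the claim is stated as a conjecture, your proposal is an honest plan rather than a proof --- which matches the status of the statement in the paper itself.

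One caution on the part you call routine. Establishing $\langle g_i^{(1)},\hat d_{ij}\rangle\ge 0$ \emph{simultaneously for all} $j\in N_i^g$ is itself not immediate: since $\oint_{\partial C_i^{gs}}n_i\,\mathrm{d}x=0$, the vector $g_i^{(1)}$ (for constant $\phi$) equals minus the integral of $n_i$ over the arcs of $\partial C_i^{gs}$ removed by the hyperbolas and by $\partial\Omega$, and while the arc removed by $\partial H_{ij}$ contributes favorably along $\hat d_{ij}$, the arcs removed by other neighbors $k$ and by the domain boundary can contribute with either sign relative to $\hat d_{ij}$. A cleaner safety mechanism, worth adding to your invariance argument, is that once the disks of $i$ and $j$ used in the GV construction meet, the cells $V_i^g,V_j^g$ degenerate and then become empty (Fig.~\ref{fig:cells_distance}), so $V_i^{gs}=\varnothing$ and $u_i=0$ before contact; this yields collision avoidance without resolving the angular estimate at all.
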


As shown earlier, the direction of the vectors resulting from the integrals over $\partial V_i^{gs} \cap\partial C_i^{gs}$, $\partial V_i^{gs} \cap \partial H_{ij}$ and $\partial V_j^{gs} \cap \partial H_{ji}$ of (\ref{control_law}) is the same with respect to the bisector of $x_ix_j$.
Thus keeping only the first term of (\ref{control_law}) will also result in a movement of node $i$ away from the bisector.

\section{Simulation Studies}
\label{section:simulation_studies}

A network of $n=10$ agents with common positioning uncertainty circle radius $r^u = 0.05$ and common sensing radius $r^s = 0.3$ was used in the simulations.
The a-priori importance was $\phi(x) = 1, ~\forall x \in \Omega$ The region $\Omega$ under consideration is the same as in~\cite{Stergiopoulos_IETCTA10}.
The initial network configuration for all the case studies is shown in Figure \ref{fig:initial_network}.
For these simulation studies the suboptimal control law (\ref{control_suboptimal}) was used.

\begin{figure}[htb]
	\centering
	\includegraphics[width=0.3\textwidth]{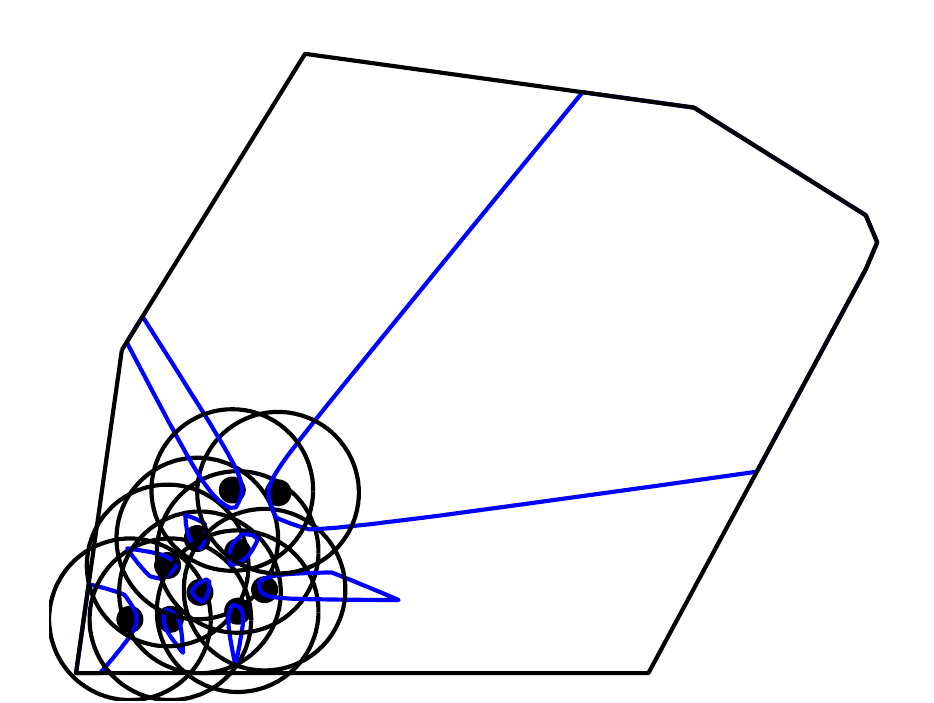}
	\caption{Initial network configuration for all case studies.}
	\label{fig:initial_network}
\end{figure}

\subsection{Case Study I}
The agents start concentrated in one corner of $\Omega$ and spread until they reach a local maximum of the objective (\ref{Hgc decomposition}).
We observe that the agents that move first are those on the "boundary" of the group, thus allowing the rest of the agents to move without collisions.
Some of the agents slide along the boundary of $\Omega$ resulting in the linear trajectories that can be seen in Figure \ref{fig:case1_network}.
This is expected behavior since the nodes move away from each other while at the same time increasing their respective covered area.
If a node moved in a way that resulted in it's sensing region to be partially outside $\Omega$, it's covered area would decrease and so the nodes' sensing regions slide along the boundary of $\Omega$.
In the final network configuration as seen in Figure \ref{fig:case1_network} all of the agent's sensing regions are completely contained within their GV cells and thus the maximum possible coverage is achieved.
This is not guaranteed to happen in every situation as the control law leads the network to a local maximum of the objective function.
In Figure \ref{fig:case12_area} we observe that the covered area increases monotonically as was expected.

\begin{figure}[htb]
	\centering
	\includegraphics[width=0.23\textwidth]{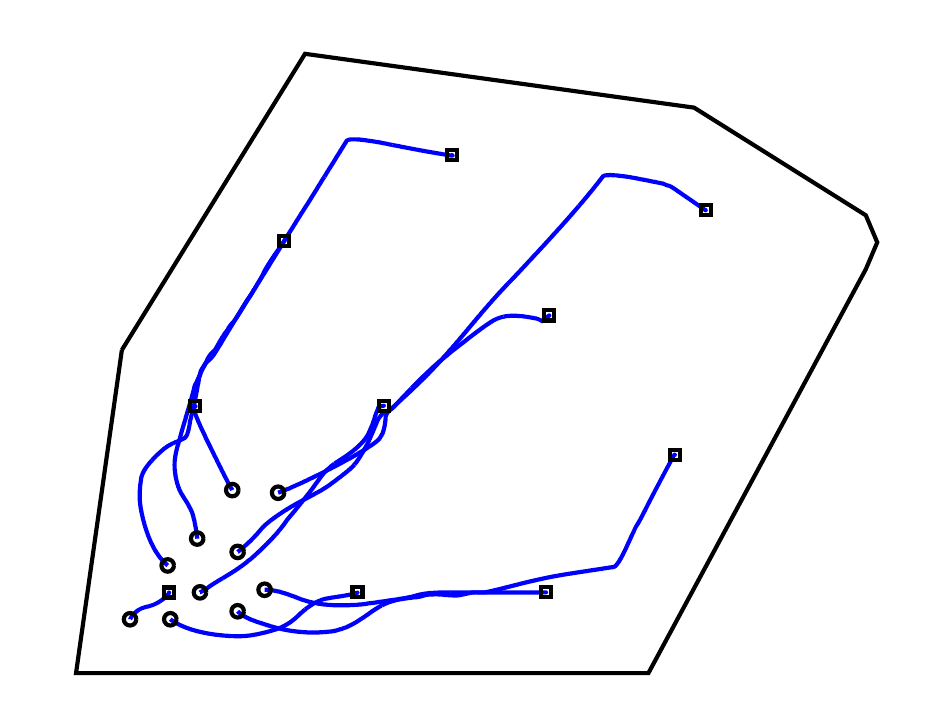}\hspace{0.01cm}
	\includegraphics[width=0.23\textwidth]{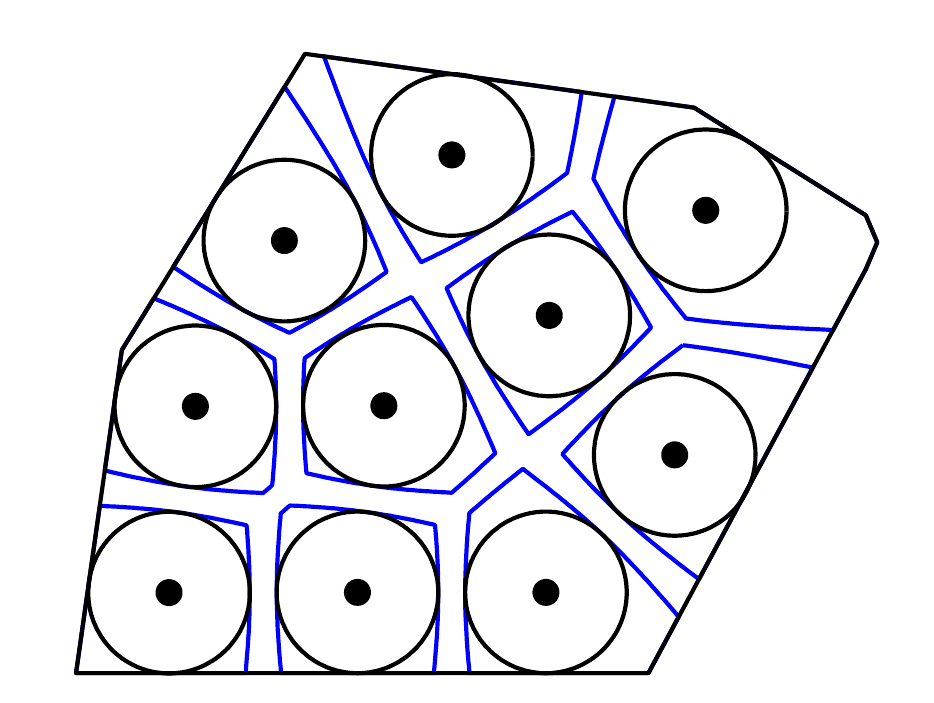}
	\caption{Case Study I [Left]: Agents' disk-center trajectories. [Right] Final network configuration.}
	\label{fig:case1_network}
\end{figure}

\subsection{Case Study II}
The initial network configuration is the same as in Case Study I.
At some point during the simulation before convergence, one of the agents stops moving while still having a functioning sensor, simulating a motor failure.
Despite this, the other agents achieve area coverage and avoid collisions between them and with the immobilized agent.
As it is expected, the network takes longer to converge since the agents have to move around the immobilized one.
Because of the immobile agent, the coverage of $\Omega$ is 98.4 \% of the maximum possible as seen in Figure \ref{fig:case12_area}.

A video of the two case studies can be found in:
\newline
\url{https://sotiris.papatheodorou.xyz/papers/2016_MED_PST/2016_MED_PST.mp4}

\begin{figure}[htb]
	\centering
	\includegraphics[width=0.23\textwidth]{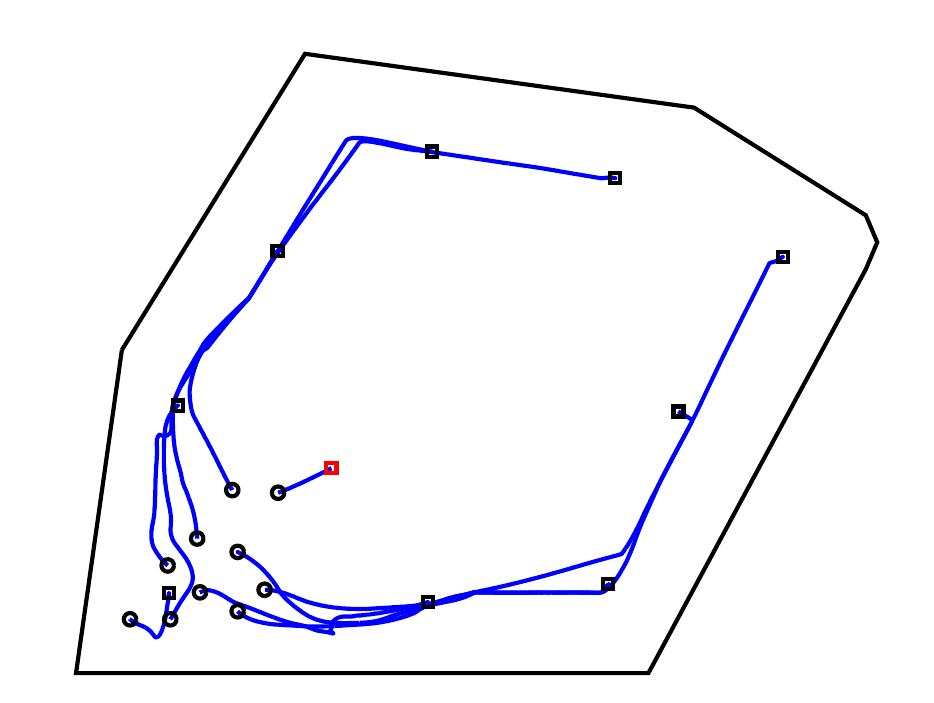}\hspace{0.01cm}
	\includegraphics[width=0.23\textwidth]{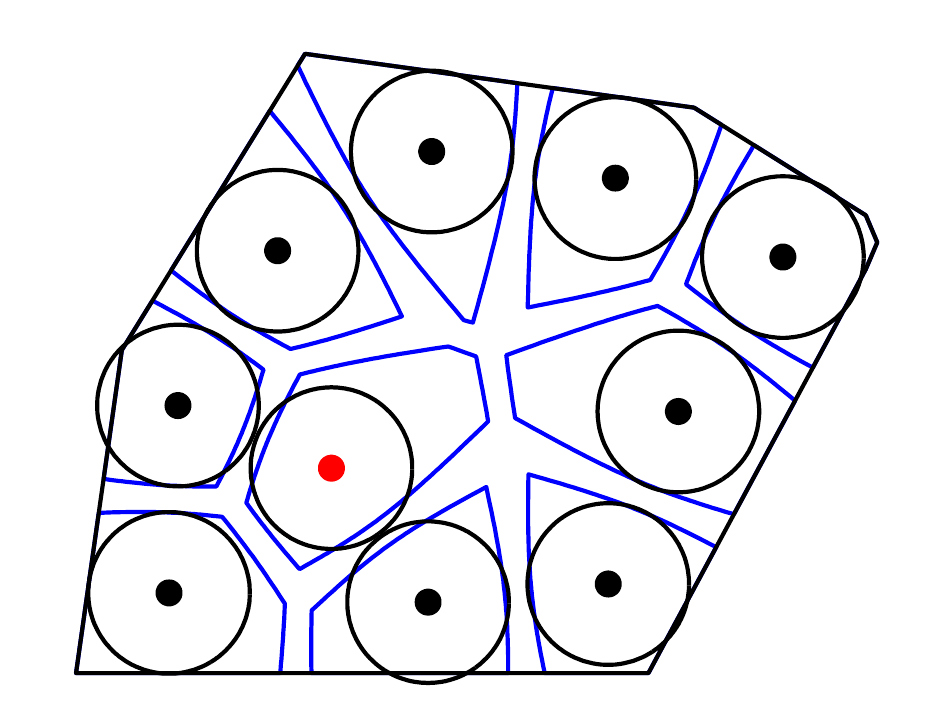}
	\caption{Case Study II The immobilized agent is shown in red. [Left]: Agents' disk-center trajectories. [Right] Final network configuration.}
	\label{fig:case2_network}
\end{figure}

\begin{figure}[htb]
	\centering
	\includegraphics[width=0.35\textwidth]{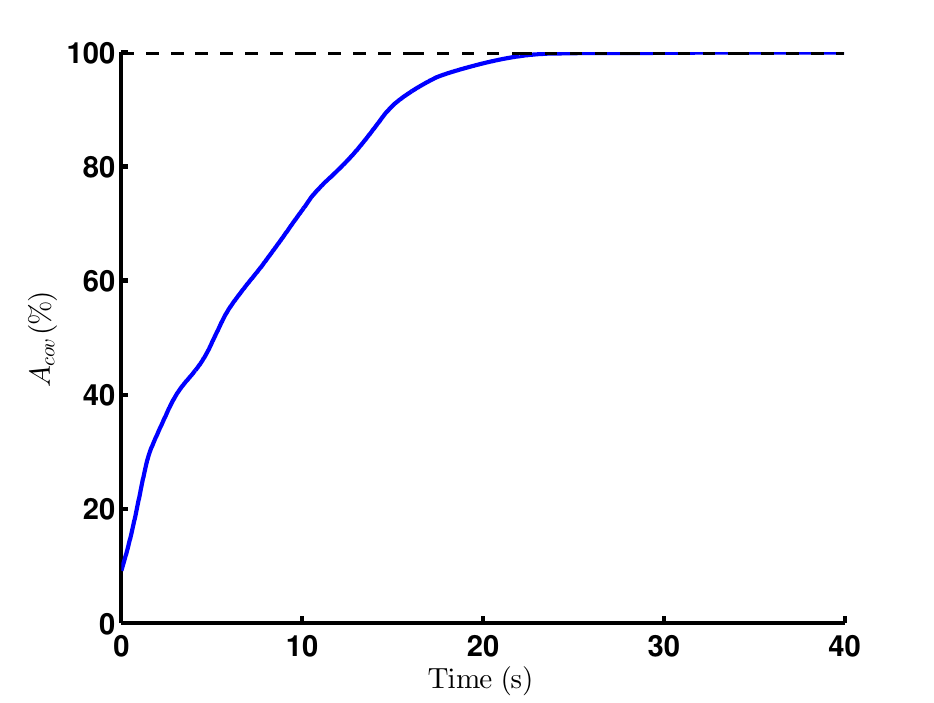}
	\includegraphics[width=0.35\textwidth]{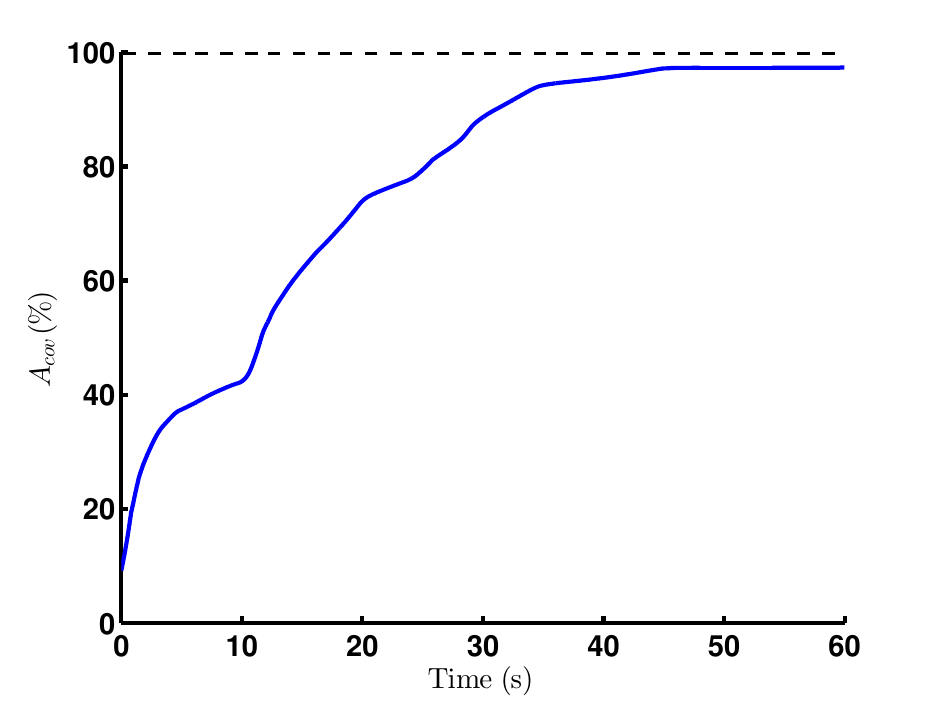}
	\caption{Percentage of the maximum possible covered covered area. $\bigcup_{i\in I_{n}}C_i^{gs}$ is shown in black and $\bigcup_{i\in I_{n}}V_i^{gs}$ in blue. [Top]: Case Study I. [Bottom] Case Study II.}
	\label{fig:case12_area}
\end{figure}

\section{Conclusions}
The area coverage for a homogeneous network of mobile robots with imprecise localization are examined in this article.
A gradient-ascent based scheme is used for area coverage in conjunction with a Guaranteed Voronoi tessellation of the area under surveillance.
This control scheme has an inherent collision avoidance property and no additional switching control laws are required.
Simulation studies are presented to show the efficiency and robustness of the proposed control method.

\bibliographystyle{IEEEtran}
\bibliography{references}
\end{document}